\preprint{}
\newcommand{\sites}{\Lambda}
\newcommand{\bonds}{B}
\theoremstyle{plain}
\newtheorem{thm}{Theorem}
\newtheorem{lem}[thm]{Lemma}
\newtheorem{cor}[thm]{Corollary}
\theoremstyle{definition}
\newtheorem{dfn}[thm]{Definition}
\theoremstyle{remark}
\newtheorem{rem}[thm]{Remark}
\begin{document}

\title{Uniqueness of steady states of Gorini-Kossakowski-Sudarshan-Lindblad equations:\\ a simple proof}

\author{Hironobu Yoshida}
\email{hironobu-yoshida57@g.ecc.u-tokyo.ac.jp}
\affiliation{Department of Physics, Graduate School of Science, The University of Tokyo, 7-3-1 Hongo, Tokyo 113-0033, Japan}
\begin{abstract}
We present a simple proof of a sufficient condition for the uniqueness of non-equilibrium steady states of Gorini-Kossakowski-Sudarshan-Lindblad equations. 
We demonstrate the applications of the sufficient condition using examples of the transverse-field Ising model, the XYZ model, and the tight-binding model with dephasing.
\end{abstract}

\maketitle

\section{Introduction}

Recent advances in quantum engineering have brought renewed interest in the effect of dissipation on quantum many-body systems. Under the Markov approximation, the dynamics of an open quantum system is described by the Gorini-Kossakowski-Sudarshan-Lindblad (GKSL) equation~\cite{gorini_completely_1976,lindblad_generators_1976,breuer_theory_2007}. 
Throughout this paper, we consider quantum systems described by a $d$-dimensional Hilbert space $\mathcal{H}$. By writing the set of linear operators on $\mathcal{H}$ as $\mathcal{B}(\mathcal{H})$, a state of the system is described by a density operator $\rho \in \mathcal{B}(\mathcal{H})$ that is Hermitian, positive semidefinite, and $\Tr \rho=1$. Then, the GKSL equation reads
\begin{equation}
  \frac{d \rho}{d \tau}=\mathcal{\hat{L}} (\rho)=-i[H, \rho]+\sum_{m=1}^M \left(L_{m} \rho L_{m}^{\dagger}-\frac{1}{2}\left\{L_{m}^{\dagger} L_{m}, \rho\right\}\right).
  \label{eq:Lindblad_matrix}
\end{equation}
Here, $\tau$ is the time, $H$ is the Hamiltonian, and $L_m$ $(m=1,\ldots,M)$ are the Lindblad operators that act on $\mathcal{H}$.

We write the eigenvalues of $\mathcal{\hat{L}}$ as $\Gamma_j$ and corresponding eigenmodes as $\rho_j$. Then, $\Re [\Gamma_j] \leq 0$ for all $\Gamma_j$.
A non-equilibrium steady state (NESS) $\rho_\infty$ is a density operator  that is an eigenoperator of $\mathcal{\hat{L}}$
with eigenvalue $0$. 
There always
exists at least one NESS in a finite-dimensional system. However, whether the NESS is unique or not depends on the system.
Frigerio~\cite{frigerio_quantum_1977,frigerio_stationary_1978} gave an algebraic criterion for the uniqueness of the NESS with the assumption that there exists a positive definite (or full-rank) NESS. See Appendix \ref{sec:frigerio} for the details of the result. See also related results by 
Spohn~\cite{spohn_approach_1976,spohn_algebraic_1977} and Evans~\cite{evans_irreducible_1977}, Ref.~\cite{spohn_kinetic_1980,nigro_uniqueness_2019} for a review of these works, Ref.~\cite{prosen_comments_2012} for an application, and Ref.~\cite{baumgartner_analysis_2008,baumgartner_analysis_2008-1,wolf_2010,buca_note_2012,burgarthErgodicMixingQuantum2013, albert_symmetries_2014,albert_geometry_2016,zhang_stationary_2020, amato_2023} for recent progress in understanding the degeneracy of the NESSs.

In this paper, we provide a proof of a sufficient condition for the uniqueness of NESS. Compared with Frigerio's theorem, our theorem does not require any prior information about the NESS. While a sufficient condition for general infinite-dimensional systems is presented in Ref.~\cite{fagnola_subharmonic_2002}, the proof provided there requires knowledge of von Neumann algebras. In contrast, our paper focuses solely on finite-dimensional systems. The significant advantage of such a limitation is that our proof for the sufficient condition is much more concise compared to Ref.~\cite{fagnola_subharmonic_2002}, and readers are only expected to possess an elementary knowledge of linear algebra to comprehend the proof.
Next, we see that the sufficient condition can also be used to study the steady-state degeneracy of systems with strong symmetries. In the presence of a strong symmetry, there is at least one NESS in every symmetry sector~\cite{buca_note_2012,zhang_stationary_2020}.
We give a sufficient condition for the uniqueness of the NESS in every symmetry sector.
Finally, we demonstrate the applications of the sufficient condition using examples of the transverse-field Ising model, the XYZ model, and the tight-binding model with dephasing.

\section{Main theorem}
\begin{thm}
\label{thm:main1}
If the set of operators $\left\{H-\frac{i}{2}\sum_{m=1}^M L^\dagger_m L_m, L_1, \ldots, L_M\right\}$ generates 
all the operators 
under multiplication, addition, and scalar multiplication, then $\rho_\infty$ is unique and positive definite.
\end{thm}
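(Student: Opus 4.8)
The plan is to reduce everything to a single \emph{support lemma}: if $\rho_\infty$ is any NESS with support projection $P$ (the orthogonal projection onto $\operatorname{supp}\rho_\infty=(\ker\rho_\infty)^\perp$), then the subspace $\mathcal{K}:=\operatorname{range}P$ is invariant under each of $K:=H-\tfrac{i}{2}\sum_m L_m^\dagger L_m$ and $L_1,\dots,L_M$, i.e.\ $(1-P)KP=0$ and $(1-P)L_mP=0$ for all $m$. To prove it I would first rewrite the generator as
\[
  \mathcal{\hat{L}}(\rho)=-iK\rho+i\rho K^\dagger+\sum_{m=1}^M L_m\rho L_m^\dagger ,
\]
and use $P\rho_\infty=\rho_\infty P=\rho_\infty$ throughout. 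Sandwiching $\mathcal{\hat{L}}(\rho_\infty)=0$ by $1-P$ on both sides annihilates the two $K$-terms (each contains a factor $\rho_\infty(1-P)=0$ or $(1-P)\rho_\infty=0$) and leaves $\sum_m \big[(1-P)L_m\sqrt{\rho_\infty}\,\big]\big[(1-P)L_m\sqrt{\rho_\infty}\,\big]^\dagger=0$; each summand is positive semidefinite, hence zero, and since $\operatorname{range}\sqrt{\rho_\infty}=\mathcal{K}$ this gives $(1-P)L_mP=0$. Feeding this back, sandwiching $\mathcal{\hat{L}}(\rho_\infty)=0$ by $P$ on the left and $1-P$ on the right kills the dissipator (because now $\rho_\infty L_m^\dagger(1-P)=\rho_\infty(1-P)L_m^\dagger(1-P)=0$) and the $-iK\rho_\infty$ term, leaving $\rho_\infty K^\dagger(1-P)=0$; its adjoint is $(1-P)K\rho_\infty=0$, and since $\operatorname{range}\rho_\infty=\mathcal{K}$ this is $(1-P)KP=0$.

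Given the lemma, positive definiteness of every NESS is immediate: $\mathcal{K}$ is nonzero (because $\operatorname{Tr}\rho_\infty=1$) and is invariant under all of $K,L_1,\dots,L_M$, hence under the whole algebra they generate, which by hypothesis is $\mathcal{B}(\mathcal{H})$; since $\mathcal{B}(\mathcal{H})$ acts irreducibly on $\mathcal{H}$, its only invariant subspaces are $\{0\}$ and $\mathcal{H}$, so $\mathcal{K}=\mathcal{H}$ and $\rho_\infty>0$.

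For uniqueness I would argue by contradiction, using that the fixed-point space $\mathcal{F}:=\ker\mathcal{\hat{L}}$ is $*$-closed (since $\mathcal{\hat{L}}(X^\dagger)=\mathcal{\hat{L}}(X)^\dagger$). If $\dim\mathcal{F}>1$ then $\mathcal{F}$ contains a Hermitian element not proportional to $\rho_\infty$, and subtracting the appropriate multiple of $\rho_\infty$ produces a nonzero traceless Hermitian $X\in\mathcal{F}$. Such an $X$ has a strictly negative eigenvalue, so by $\rho_\infty>0$ there is a largest $\epsilon_0>0$ with $\rho_\infty+\epsilon_0 X\ge 0$; by maximality and continuity of the smallest eigenvalue, $\rho_\infty+\epsilon_0 X$ has a nontrivial kernel, yet it is a NESS (trace one, in $\mathcal{F}$), contradicting the positive definiteness just established. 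Hence $\dim\mathcal{F}=1$, and since at least one NESS exists it is unique.

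The main obstacle I anticipate is the support lemma itself — in particular arranging the ``off-diagonal'' sandwich so that $K$-invariance drops out of the already-proven $L_m$-invariance; once the lemma is in hand, irreducibility of the full matrix algebra and the one-parameter push to the boundary of the positive cone finish the argument quickly.
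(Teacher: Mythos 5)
Your proposal is correct and follows essentially the same route as the paper: your support lemma is the mirror image (support projection $P$ versus kernel vectors, $K,L_m$ versus $K^\dagger,L_m^\dagger$) of the paper's Lemma~\ref{lem:main2}, obtained by the same two sandwiches of $\hat{\mathcal{L}}(\rho_\infty)=0$, and your uniqueness step --- pushing $\rho_\infty+\epsilon X$ along a line of fixed points until the smallest eigenvalue hits zero --- is the same continuity argument the paper runs with $(1-x)\rho_1-x\rho_2$. The only (harmless) extras are your use of $*$-closedness of $\ker\hat{\mathcal{L}}$, which yields the slightly stronger conclusion $\dim\ker\hat{\mathcal{L}}=1$ rather than just uniqueness of the density-operator fixed point.
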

To prove Theorem \ref{thm:main1}, we prove the following lemma. 

\begin{lem}
\label{lem:main2}
Let $\rho$ be a positive semidefinite operator that satisfies $\mathcal{\hat{L}} (\rho)=0$. Under the same conditions as Theorem \ref{thm:main1}, $\rho$ is positive definite or zero.
\end{lem}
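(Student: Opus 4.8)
The plan is to rewrite the generator in terms of the non-Hermitian effective Hamiltonian $K := H-\tfrac{i}{2}\sum_{m=1}^M L_m^\dagger L_m$, for which one checks directly that $\hat{\mathcal{L}}(\rho) = -i\left(K\rho-\rho K^\dagger\right)+\sum_{m=1}^M L_m\rho L_m^\dagger$. In this form the hypothesis of Theorem~\ref{thm:main1} says precisely that $\{K,L_1,\ldots,L_M\}$ generates the full operator algebra $\mathcal{B}(\mathcal{H})$. Given a nonzero positive semidefinite $\rho$ with $\hat{\mathcal{L}}(\rho)=0$, I would show that its support projection $P$ (equivalently, its kernel) is left invariant by $K$ and by every $L_m$. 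Since $\mathcal{B}(\mathcal{H})$ acts irreducibly on $\mathcal{H}$, its only invariant subspaces are $\{0\}$ and $\mathcal{H}$, so the support of $\rho$ must be all of $\mathcal{H}$ (as $\rho\neq 0$), i.e.\ $\rho$ is positive definite.

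First I would exploit positivity. Let $Q=\mathds{1}-P$ be the projection onto $\ker\rho$, so $Q\rho=\rho Q=0$. Taking the trace of $Q\,\hat{\mathcal{L}}(\rho)=0$ and using cyclicity, the two terms involving $K$ vanish because $\rho Q=0$, leaving $\sum_m \Tr\!\left(L_m^\dagger Q L_m\rho\right)=0$. Each summand equals $\Tr\!\left[(QL_m\rho^{1/2})^\dagger (QL_m\rho^{1/2})\right]\ge 0$, so every summand is zero, which gives $QL_m\rho^{1/2}=0$ and hence $QL_m\rho=0$ for all $m$. Because $\rho$ and $P$ have the same range, this is equivalent to $QL_mP=0$, i.e.\ $L_m$ maps $\operatorname{supp}\rho$ into itself.

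Next I would feed this back into the steady-state equation. Multiplying $\hat{\mathcal{L}}(\rho)=0$ on the left by $Q$ gives $-i\left(QK\rho-Q\rho K^\dagger\right)+\sum_m QL_m\rho L_m^\dagger=0$; since $Q\rho=0$ and $QL_m\rho=0$ by the previous step, this reduces to $QK\rho=0$, hence $QKP=0$ and $\operatorname{supp}\rho$ is invariant under $K$ as well. Thus $\operatorname{supp}\rho$ is an invariant subspace for the generating set $\{K,L_1,\ldots,L_M\}$ and therefore for the whole algebra it generates, which by hypothesis is $\mathcal{B}(\mathcal{H})$; irreducibility then forces $\operatorname{supp}\rho$ to be either $\{0\}$ or $\mathcal{H}$, and the lemma follows.

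I expect the argument to be essentially routine once the reformulation with $K$ is in hand, so there is no serious obstacle; the two small points that require care are the passage from $QL_m\rho=0$ to $QL_mP=0$ (and likewise for $K$), which uses that a positive semidefinite operator has the same range as its support projection, and the observation that one must re-substitute the information gained in the first step into the equation rather than try to extract both invariances at once. The hypothesis of the theorem enters only at the very end, through the elementary fact that the full matrix algebra has no nontrivial invariant subspaces.
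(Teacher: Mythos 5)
Your proof is correct and follows essentially the same route as the paper: positivity of the jump term forces the off-diagonal block $QL_m\rho$ to vanish, feeding this back into $\hat{\mathcal{L}}(\rho)=0$ shows the effective Hamiltonian $K$ also preserves the support, and the generating hypothesis (irreducibility of $\mathcal{B}(\mathcal{H})$) finishes the argument. The only cosmetic difference is that you phrase the invariance in terms of the support projection and the operators $\{K,L_m\}$ themselves, whereas the paper works with a single kernel vector $\ket{\psi}$ and the adjoint set $\{K^\dagger,L_m^\dagger\}$ preserving $\operatorname{Ker}\rho$ --- these are equivalent formulations.
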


The following proofs of Theorem~\ref{thm:main1} and Lemma~\ref{lem:main2} are inspired by the method of spin reflection positivity~\cite{lieb_two_1989, tasaki_physics_2020}.

\begin{proof}[Proof of Lemma \ref{lem:main2}]
Assume that $\rho$ is positive semidefinite but not positive definite. Then, there exists a nonzero vector $\ket{\psi}\in \mathcal{H}$ such that $\rho\ket{\psi}=0$. By expanding $\bra{\psi} \mathcal{\hat{L}} (\rho) \ket{\psi}$, one finds
\begin{align}
    \bra{\psi} \mathcal{\hat{L}} (\rho) \ket{\psi}
    =\sum_{m=1}^M \bra{\psi} L_m \rho L^\dagger_m \ket{\psi}
    =\sum_{m=1}^M \|\sqrt{\rho} L^\dagger_m \ket{\psi}\|^2=0,
\end{align}
where $\sqrt{\rho}$ is a positive semidefinite operator such that $(\sqrt{\rho})^2=\rho$. Since $\|\sqrt{\rho} L^\dagger_m \ket{\psi}\|^2\geq0$, we have $\sqrt{\rho} L^\dagger_m \ket{\psi}=0$ for all $m$, which means that $\rho L^\dagger_m \ket{\psi}=0$ for all $m$. Next, by expanding $\mathcal{\hat{L}} (\rho) \ket{\psi}$, one obtains 
\begin{align}
    \mathcal{\hat{L}} (\rho)\ket{\psi}= i \rho \left(H+\frac{i}{2}\sum_{m=1}^M L^\dagger_m L_m\right)\ket{\psi}=0.
\end{align}
Therefore, if $\ket{\psi}\in \operatorname{Ker} \rho$, then $L^\dagger_m \ket{\psi}\in \operatorname{Ker} \rho$ for all $m$ and $\left(H+\frac{i}{2}\sum_{m=1}^M L^\dagger_m L_m\right)\ket{\psi}\in \operatorname{Ker} \rho$. By the assumption of Lemma \ref{lem:main2}, the set of operators $\left\{H+\frac{i}{2}\sum_{m=1}^M L^\dagger_m L_m,\ L^\dagger_1, \ldots,\ L^\dagger_M\right\}$ generates $\mathcal{B}(\mathcal{H})$, and therefore $\operatorname{Ker} \rho=\mathcal{H}$, which means that $\rho=0$.
\end{proof}

\begin{proof}[Proof of Theorem \ref{thm:main1}]
Assume that $\rho_1$ and $\rho_2$ ($\rho_1\neq \rho_2$) are NESSs. Since they are density operators, they are Hermitian, positive semidefinite, and $\Tr\rho_j=1$ $(j=1,2)$. Thus, by Lemma \ref{lem:main2}, they are positive definite. If we define~\footnote{We learned this method from Hosho Katsura.} 
\begin{equation}
    \rho_{\mathrm{un}}(x)=(1-x)\rho_1-x\rho_2 \quad (0\leq x\leq 1),
\end{equation}
$\rho_{\mathrm{un}}(0)=\rho_1$, $\rho_{\mathrm{un}}(1)=-\rho_2$, and $\mathcal{\hat{L}} (\rho_{\mathrm{un}}(x))={(1-x)\mathcal{\hat{L}}(\rho_1)}-{x\mathcal{\hat{L}}(\rho_2)}=0$ for all $x$ because $\mathcal{\hat{L}}(\rho_1)=\mathcal{\hat{L}}(\rho_2)=0$ by definition. Since all the eigenvalues of $\rho_{\mathrm{un}}(0)$ $[\rho_{\mathrm{un}}(1)]$ are positive [negative] and the spectrum of $\rho_{\mathrm{un}}(x)$ is continuous with respect to $x$, there exists a real number $0\leq x_0\leq 1$ such that the minimum eigenvalue of $\rho_{\mathrm{un}}(x_0)$ is zero. Namely, $\rho_{\mathrm{un}}(x_0)$ is positive semidefinite but not positive definite. Thus by Lemma \ref{lem:main2}, $\rho_{\mathrm{un}}(x_0)=0$. Then $\Tr \rho_{\mathrm{un}}(x_0)=1-2x_0=0$ and therefore $x_0=1/2$. Thus $\rho_{\mathrm{un}}(x_0)=(\rho_1-\rho_2)/2=0$. However, this contradicts the assumption $\rho_1\neq \rho_2$, so the NESS has to be unique.
\end{proof}

When all the Lindblad operators are Hermitian, the completely mixed state $\mathbb{I}_d/d$ is a NESS~\footnote{More generally, if there exists an $M\times M$ unitary matrix $u$ such that $L_m^\dagger = \sum_{n=1}^M u_{mn} L_n$, the completely mixed state $\mathbb{I}_d/d$ is a NESS}, where $\mathbb{I}_d$ is the identity matrix of size $d$. In this case, Theorem~\ref{thm:main1} boils down to the following corollary.

\begin{cor}
\label{cor:main3}
If all $L_m$ are Hermitian and the set of operators $\{H, L_1,\ldots,L_M\}$ generates 
all the operators
under multiplication, addition, and scalar multiplication, then $\rho_\infty$ is unique and $\rho_\infty=\mathbb{I}_d/d$.
\end{cor}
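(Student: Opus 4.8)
The plan is to derive Corollary~\ref{cor:main3} directly from Theorem~\ref{thm:main1} rather than re-running the whole argument. First I would observe that the hypothesis ``$\{H, L_1,\ldots,L_M\}$ generates all operators'' is almost the hypothesis of Theorem~\ref{thm:main1}, the only difference being that Theorem~\ref{thm:main1} uses the modified first operator $K := H - \frac{i}{2}\sum_{m=1}^M L_m^\dagger L_m$. So the key step is to show that when all $L_m$ are Hermitian, the algebra generated by $\{H, L_1,\ldots,L_M\}$ coincides with the algebra generated by $\{K, L_1,\ldots,L_M\}$. This is immediate: with $L_m^\dagger = L_m$, the element $\sum_{m=1}^M L_m^\dagger L_m = \sum_{m=1}^M L_m^2$ already lies in the subalgebra generated by $L_1,\ldots,L_M$ alone, hence $K = H - \frac{i}{2}\sum_m L_m^2$ and $H$ are interchangeable as generators — adding or subtracting $\frac{i}{2}\sum_m L_m^2$ stays inside the generated set. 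Therefore the hypothesis of Corollary~\ref{cor:main3} implies the hypothesis of Theorem~\ref{thm:main1}, and Theorem~\ref{thm:main1} gives that $\rho_\infty$ is unique.

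Next I would pin down the identity of that unique NESS. By the remark preceding the corollary, when all $L_m$ are Hermitian the completely mixed state $\mathbb{I}_d/d$ satisfies $\hat{\mathcal{L}}(\mathbb{I}_d/d) = 0$: the commutator term $-i[H,\mathbb{I}_d/d]$ vanishes, and for each $m$ the dissipator contribution $L_m (\mathbb{I}_d/d) L_m^\dagger - \tfrac12\{L_m^\dagger L_m, \mathbb{I}_d/d\} = \tfrac1d(L_m^2 - L_m^2) = 0$ using $L_m^\dagger = L_m$. So $\mathbb{I}_d/d$ is a NESS. Since the NESS is unique by the previous paragraph, it must equal $\mathbb{I}_d/d$, which is exactly the claim.

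I do not anticipate a genuine obstacle here — the content of the corollary is entirely in Theorem~\ref{thm:main1}, and the only thing to check is the harmless algebraic fact that a Hermitian-$L_m$ system lets one trade $H$ for $K = H - \frac{i}{2}\sum_m L_m^2$ without changing the generated operator set, together with the one-line verification that $\mathbb{I}_d/d$ is stationary. If anything needs care, it is making the ``generates all operators'' phrasing precise (closure under multiplication, addition, scalar multiplication, i.e.\ generation as a — not necessarily unital — subalgebra of $\mathcal{B}(\mathcal{H})$) and confirming that the argument does not secretly need the identity operator as a generator; but since $\sum_m L_m^2$ is manifestly a polynomial in the $L_m$ with no constant term, this causes no trouble.
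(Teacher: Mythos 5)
Your proposal is correct and matches the argument the paper intends: the Hermiticity of the $L_m$ lets you trade $H$ for $H-\frac{i}{2}\sum_m L_m^2$ without changing the generated algebra, so Theorem~\ref{thm:main1} applies and yields uniqueness, and the direct check that $\mathbb{I}_d/d$ is stationary identifies the NESS. Nothing further is needed.
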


\section{Strong Symmetry}
Next, we consider systems with the strong symmetry~\cite{buca_note_2012,albert_symmetries_2014,zhang_stationary_2020}. 

\begin{dfn}[Strong symmetry]
The GKSL equation has a strong symmetry if there exists a unitary operator $S$ on $\mathcal{H}$ such that
\begin{equation}
    [S,H]=0 \text{ and } [S,L_m]=0 \text{ for all } m.
    \label{eq:strong_symmetry}
\end{equation}
\end{dfn}
We write $n_S$ different eigenvalues of $S$ as
$s_\alpha=e^{i \theta_\alpha}$ $(\alpha=1,\ldots, n_S)$, and corresponding eigenspace as 
$\mathcal{H}_\alpha$. 
Then, the following theorem is proved in Ref.~\cite{buca_note_2012}.
\begin{thm}[Bu\v{c}a and Prosen]
If there is a unitary operator $S$ that satisfies Eq. \eqref{eq:strong_symmetry}, then we obtain the following 1. and 2.:
\begin{enumerate}
    \item The space of operators $\mathcal{B}(\mathcal{H})$ can be decomposed into $n_S^2$ invariant subspaces of $\mathcal{\hat{L}}$:
    \begin{equation}
        \mathcal{\hat{L}}(\mathcal{B}_{\alpha,\beta})\subseteq \mathcal{B}_{\alpha,\beta},\  \mathcal{B}_{\alpha,\beta}=\left\{ \ket{\psi} \bra{\phi}; \ket{\psi} \in \mathcal{H}_\alpha, \ket{\phi} \in \mathcal{H}_\beta \right\} 
    \end{equation}
    for $\alpha,\beta=1,\ldots,n_S.$
    \item Every $\mathcal{B}_{\alpha,\alpha}$ contains at least one NESS:
    \begin{equation}
        \rho_\infty^\alpha \in \mathcal{B}_{\alpha,\alpha} \text{ for } \alpha=1,\ldots, n_S.
    \end{equation}
\end{enumerate}

\end{thm}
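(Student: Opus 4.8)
The plan is to prove the two parts separately, with the spectral projectors of $S$ as essentially the only ingredient.

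\emph{Part 1.} I would start from the spectral decomposition $S=\sum_{\alpha=1}^{n_S}s_\alpha P_\alpha$, where $P_\alpha$ is the orthogonal projector onto $\mathcal{H}_\alpha$. Since $S$ is normal and the $s_\alpha$ are distinct, each $P_\alpha$ is a polynomial in $S$ (Lagrange interpolation), so $[P_\alpha,H]=0$ and $[P_\alpha,L_m]=0$ for all $m$; taking Hermitian conjugates and using $S^\dagger=S^{-1}$ then also gives $[P_\alpha,L_m^\dagger]=0$. Because $P_\alpha P_\beta=\delta_{\alpha\beta}P_\alpha$ and $\sum_\alpha P_\alpha=\mathbb{I}$, every $X\in\mathcal{B}(\mathcal{H})$ decomposes uniquely as $X=\sum_{\alpha,\beta}P_\alpha X P_\beta$, which is exactly the claimed direct-sum decomposition into the $n_S^2$ subspaces $\mathcal{B}_{\alpha,\beta}=P_\alpha\mathcal{B}(\mathcal{H})P_\beta$. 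Invariance under $\mathcal{\hat{L}}$ is then immediate: for $X=P_\alpha X P_\beta$ one slides $P_\alpha$ to the far left and $P_\beta$ to the far right through each term of $\mathcal{\hat{L}}$ using the commutation relations above, obtaining $-i[H,X]=P_\alpha(-i[H,X])P_\beta$, $L_m X L_m^\dagger=P_\alpha(L_m X L_m^\dagger)P_\beta$, and $\{L_m^\dagger L_m,X\}=P_\alpha\{L_m^\dagger L_m,X\}P_\beta$, hence $\mathcal{\hat{L}}(X)=P_\alpha\mathcal{\hat{L}}(X)P_\beta\in\mathcal{B}_{\alpha,\beta}$.

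\emph{Part 2.} Fixing $\alpha$, I would show that the restriction $\mathcal{\hat{L}}|_{\mathcal{B}_{\alpha,\alpha}}$ is itself a GKSL generator on the finite-dimensional space $\mathcal{H}_\alpha$. Under the identification $\mathcal{B}_{\alpha,\alpha}\cong\mathcal{B}(\mathcal{H}_\alpha)$, and with $H_\alpha=P_\alpha H P_\alpha$, $L_{m,\alpha}=P_\alpha L_m P_\alpha$ regarded as operators on $\mathcal{H}_\alpha$, the same commutation relations give, for $X=P_\alpha X P_\alpha$, the identities $HX=H_\alpha X$, $XH=XH_\alpha$, $L_m X L_m^\dagger=L_{m,\alpha}XL_{m,\alpha}^\dagger$, $L_m^\dagger L_m X=L_{m,\alpha}^\dagger L_{m,\alpha}X$ (and the analogous statement for the right anticommutator term), so that $\mathcal{\hat{L}}|_{\mathcal{B}_{\alpha,\alpha}}$ has precisely the form \eqref{eq:Lindblad_matrix} with Hamiltonian $H_\alpha$ and jump operators $L_{1,\alpha},\ldots,L_{M,\alpha}$. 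Since $\mathcal{H}_\alpha$ is finite-dimensional, this restricted generator has at least one steady state $\rho_\infty^\alpha$ by the existence statement recalled in the Introduction; viewed inside $\mathcal{B}(\mathcal{H})$ it lies in $\mathcal{B}_{\alpha,\alpha}$, is positive semidefinite with unit trace, and satisfies $\mathcal{\hat{L}}(\rho_\infty^\alpha)=0$, so it is a NESS in $\mathcal{B}_{\alpha,\alpha}$. (Alternatively one could avoid quoting that result and take $\rho_\infty^\alpha$ to be a limit point of the time averages $\tfrac{1}{T}\int_0^T e^{\tau\mathcal{\hat{L}}}(P_\alpha/\dim\mathcal{H}_\alpha)\,d\tau$, which remain in the compact convex set of density operators in $\mathcal{B}_{\alpha,\alpha}$ while $\mathcal{\hat{L}}$ applied to them tends to $0$.)

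\emph{Main obstacle.} The computations themselves — the interpolation formula for $P_\alpha$ and the line-by-line check that the projectors commute past $\mathcal{\hat{L}}$ — are routine. The one point that genuinely needs care in Part 1 is to resist organizing the argument around conjugation by $S$, i.e. $X\mapsto SXS^\dagger$: that superoperator only shows $\mathcal{\hat{L}}$ preserves its eigenspaces, and these are strictly coarser than the $\mathcal{B}_{\alpha,\beta}$ whenever two of the products $s_\alpha\overline{s_\beta}$ coincide, so one does not recover the full $n_S^2$-fold decomposition. Keeping the left projector $P_\alpha$ and the right projector $P_\beta$ separate — equivalently, using that left multiplication by $S$ and right multiplication by $S^\dagger$ each separately commute with $\mathcal{\hat{L}}$ — is what resolves it. For Part 2 the only subtlety is verifying that the restricted map is truly of GKSL form (not merely trace-preserving and positive), so that the finite-dimensional existence result applies verbatim.
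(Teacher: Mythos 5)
Your proof is correct, but note that the paper itself does not prove this theorem: it is quoted from Bu\v{c}a and Prosen~\cite{buca_note_2012}, so there is no in-paper argument to compare against. Your route --- writing $P_\alpha$ as a Lagrange interpolation polynomial in $S$ so that the spectral projectors inherit the commutation relations, decomposing $X=\sum_{\alpha,\beta}P_\alpha XP_\beta$, and sliding the projectors through each term of $\mathcal{\hat{L}}$ --- is the standard one, and is equivalent to the original reference's formulation in terms of the two commuting superoperators of left multiplication by $S$ and right multiplication by $S^\dagger$ (your closing remark correctly identifies why conjugation by $S$ alone would be too coarse). Part 2 is also sound: the restriction of $\mathcal{\hat{L}}$ to $\mathcal{B}_{\alpha,\alpha}\cong\mathcal{B}(\mathcal{H}_\alpha)$ is genuinely of GKSL form with $H_\alpha=P_\alpha HP_\alpha$ and $L_{m,\alpha}=P_\alpha L_mP_\alpha$, so the finite-dimensional existence result invoked in the Introduction (or your Ces\`aro-average fallback) applies verbatim. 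The only cosmetic mismatch is that the paper's set-builder definition of $\mathcal{B}_{\alpha,\beta}$ literally lists rank-one operators, whereas you correctly work with its linear span $P_\alpha\mathcal{B}(\mathcal{H})P_\beta$, which is clearly what is intended.
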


This theorem states that in the presence of strong symmetry, NESSs are always degenerate. However, we can apply Theorem \ref{thm:main1} to prove the uniqueness of the NESS in $\mathcal{B}_{\alpha,\alpha}$. When $H$ and $L_m$ commute with $S$, they can be decomposed as $H=\oplus_{\alpha=1}^{n_S} H|_{\mathcal{H}_\alpha}$ and $L_m=\oplus_{\alpha=1}^{n_S} L_m|_{\mathcal{H}_\alpha}$, where $H|_{\mathcal{H}_\alpha}$ and $L_m|_{\mathcal{H}_\alpha}$ are elements of $\mathcal{B}_{\alpha,\alpha}$. Then, if ${\{H-\frac{i}{2}\sum_{m=1}^M L^\dagger_m L_m, L_1, \ldots, L_M\}}$ generates all the operators that commute with $S$, the set of operators $\{H-\frac{i}{2}\sum_{m=1}^M L^\dagger_m L_m|_{\mathcal{H}_\alpha}, L_1|_{\mathcal{H}_\alpha}, \ldots, L_M|_{\mathcal{H}_\alpha}\}$ generates $\mathcal{B}_{\alpha,\alpha}$ for all $\alpha$~\footnote{Note that an operator $A\in \mathcal{B}(\mathcal{H})$ commutes with $S$ if and only if $A$ is expressed as $A=\oplus_{\alpha=1}^{n_S}A_\alpha$, where $A_\alpha \in \mathcal{B}_{\alpha,\alpha}$.}. By applying Theorem \ref{thm:main1} to $\mathcal{H}_\alpha$ and writing $\operatorname{dim} \mathcal{H}_\alpha=d_\alpha$, 
we have the following corollaries:

\begin{cor}
\label{cor:sym_1}
    If the set of operators $\left\{H-\frac{i}{2}\sum_{m}L^\dagger_m L_m, L_1,\ldots, L_M\right\}$ generates all the operators that commute with $S$ under multiplication, addition, and scalar multiplication, then $\rho_\infty^\alpha|_{\mathcal{H}_\alpha}$ is unique and positive definite for all $\alpha$.
\end{cor}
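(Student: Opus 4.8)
The plan is to reduce the statement to a block-by-block application of Theorem~\ref{thm:main1}. The key structural fact — recorded already in the footnote attached to the paragraph preceding the corollary — is that an operator $A\in\mathcal{B}(\mathcal{H})$ commutes with $S$ if and only if $A=\oplus_{\alpha=1}^{n_S}A_\alpha$ with $A_\alpha\in\mathcal{B}_{\alpha,\alpha}$; consequently the algebra of all operators commuting with $S$ is the direct sum $\bigoplus_{\alpha=1}^{n_S}\mathcal{B}_{\alpha,\alpha}$, and multiplication, addition, and scalar multiplication all act blockwise. I would first fix a single sector index $\alpha$ and set up the restricted GKSL generator $\hat{\mathcal{L}}_\alpha$ on $\mathcal{B}(\mathcal{H}_\alpha)$ defined by the restricted data $H|_{\mathcal{H}_\alpha}$ and $L_m|_{\mathcal{H}_\alpha}$; the strong-symmetry decomposition $H=\oplus_\alpha H|_{\mathcal{H}_\alpha}$, $L_m=\oplus_\alpha L_m|_{\mathcal{H}_\alpha}$ guarantees that $\hat{\mathcal{L}}$ restricted to $\mathcal{B}_{\alpha,\alpha}$ coincides with $\hat{\mathcal{L}}_\alpha$, so a NESS $\rho_\infty^\alpha\in\mathcal{B}_{\alpha,\alpha}$ of $\hat{\mathcal{L}}$ is precisely a NESS of $\hat{\mathcal{L}}_\alpha$ (after the harmless normalization $\Tr\rho_\infty^\alpha=1$, which does not affect the eigenvalue-$0$ condition).

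The next step is the generation statement: I would show that $\left\{H|_{\mathcal{H}_\alpha}-\frac{i}{2}\sum_{m}L_m^\dagger L_m|_{\mathcal{H}_\alpha},\,L_1|_{\mathcal{H}_\alpha},\ldots,L_M|_{\mathcal{H}_\alpha}\right\}$ generates all of $\mathcal{B}(\mathcal{H}_\alpha)$. This follows from the hypothesis together with blockwise arithmetic: by assumption the unrestricted generators produce every operator commuting with $S$, i.e.\ every element of $\bigoplus_\beta\mathcal{B}_{\beta,\beta}$; projecting any such word onto the $\alpha$-th block (equivalently, conjugating by the spectral projector $P_\alpha$ onto $\mathcal{H}_\alpha$, which commutes with all the generators) turns a word in the full generators into the same word in the restricted generators, and since the full words exhaust $\bigoplus_\beta\mathcal{B}_{\beta,\beta}$ their $\alpha$-components exhaust $\mathcal{B}_{\alpha,\alpha}\cong\mathcal{B}(\mathcal{H}_\alpha)$. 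I should note that $\sum_m L_m^\dagger L_m|_{\mathcal{H}_\alpha}=\sum_m (L_m|_{\mathcal{H}_\alpha})^\dagger(L_m|_{\mathcal{H}_\alpha})$, so the ``drift'' generator for the restricted system is genuinely the $\alpha$-block of the drift generator for the full system.

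With these two points in place the corollary is immediate: the restricted data $(H|_{\mathcal{H}_\alpha},L_m|_{\mathcal{H}_\alpha})$ on the finite-dimensional Hilbert space $\mathcal{H}_\alpha$ satisfy the hypothesis of Theorem~\ref{thm:main1}, so $\hat{\mathcal{L}}_\alpha$ has a unique NESS and it is positive definite on $\mathcal{H}_\alpha$; that NESS is $\rho_\infty^\alpha|_{\mathcal{H}_\alpha}$, and since $\alpha$ was arbitrary the conclusion holds for all $\alpha=1,\ldots,n_S$. The main obstacle, such as it is, lies entirely in the generation step of the previous paragraph: one must be careful that ``generates all operators commuting with $S$'' — a statement about words in the $n_S^2$-block-diagonal subalgebra $\bigoplus_\beta\mathcal{B}_{\beta,\beta}$ — really does restrict to ``generates all of $\mathcal{B}(\mathcal{H}_\alpha)$'' for each fixed $\alpha$, which is why invoking the projector $P_\alpha$ (and the fact that it commutes with every generator, hence with every word) is the crucial move rather than a routine remark. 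Everything else is bookkeeping and a citation of Theorem~\ref{thm:main1}.
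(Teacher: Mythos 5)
Your proposal is correct and follows essentially the same route as the paper: decompose $H$ and $L_m$ into blocks over the eigenspaces $\mathcal{H}_\alpha$ of $S$, observe that the hypothesis forces the restricted generators to generate all of $\mathcal{B}(\mathcal{H}_\alpha)$, and apply Theorem~\ref{thm:main1} sector by sector. The only difference is that you spell out the restriction step via the spectral projector $P_\alpha$, which the paper relegates to a footnote; this is a welcome clarification but not a different argument.
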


\begin{cor}
\label{cor:sym_2}
    If all $L_m$ are Hermitian and the set of operators $\{H, L_1, \ldots, L_M\}$ generates all the operators that commute with $S$ under multiplication, addition, and scalar multiplication, then $\rho_\infty^\alpha|_{\mathcal{H}_\alpha}$ is unique and $\rho_\infty^\alpha|_{\mathcal{H}_\alpha}=\mathbb{I}_{d_\alpha}/{d_\alpha}$ for all $\alpha$.
\end{cor}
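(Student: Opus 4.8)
The plan is to reduce Corollary~\ref{cor:sym_2} to Corollary~\ref{cor:main3} by passing to each symmetry sector. First I would record the structure imposed by the strong symmetry: since $[S,H]=0$ and $[S,L_m]=0$, every one of $H$ and the $L_m$ is block diagonal with respect to $\mathcal{H}=\bigoplus_\alpha \mathcal{H}_\alpha$, so one may write $H=\bigoplus_\alpha H|_{\mathcal{H}_\alpha}$ and $L_m=\bigoplus_\alpha L_m|_{\mathcal{H}_\alpha}$ with $H|_{\mathcal{H}_\alpha},L_m|_{\mathcal{H}_\alpha}\in\mathcal{B}_{\alpha,\alpha}=\mathcal{B}(\mathcal{H}_\alpha)$. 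Because $H$ and each $L_m$ are Hermitian and Hermiticity of a block-diagonal operator is inherited blockwise, every restriction $H|_{\mathcal{H}_\alpha}$, $L_m|_{\mathcal{H}_\alpha}$ is again Hermitian.

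Next I would check that the algebraic hypothesis descends to each sector. Let $\mathcal{A}$ be the algebra generated by $\{H,L_1,\dots,L_M\}$; by assumption $\mathcal{A}$ equals the set of all operators commuting with $S$, which by the footnoted characterization is exactly $\bigoplus_\beta\mathcal{B}(\mathcal{H}_\beta)$. For a fixed $\alpha$, the compression $\pi_\alpha\colon A\mapsto A|_{\mathcal{H}_\alpha}$, \emph{restricted to block-diagonal operators}, respects sums, scalar multiples, and products, i.e.\ it is an algebra homomorphism on $\bigoplus_\beta\mathcal{B}(\mathcal{H}_\beta)\supseteq\mathcal{A}$. Hence $\pi_\alpha(\mathcal{A})$ is the algebra generated by $\pi_\alpha(H)=H|_{\mathcal{H}_\alpha}$ and $\pi_\alpha(L_m)=L_m|_{\mathcal{H}_\alpha}$; but also $\pi_\alpha(\mathcal{A})=\pi_\alpha\!\bigl(\bigoplus_\beta\mathcal{B}(\mathcal{H}_\beta)\bigr)=\mathcal{B}(\mathcal{H}_\alpha)$. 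Therefore $\{H|_{\mathcal{H}_\alpha},L_1|_{\mathcal{H}_\alpha},\dots,L_M|_{\mathcal{H}_\alpha}\}$ generates all operators on $\mathcal{H}_\alpha$.

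Now I would identify the dynamics inside the invariant subspace $\mathcal{B}_{\alpha,\alpha}$. For $\rho\in\mathcal{B}_{\alpha,\alpha}$ the off-block pieces of $H$ and of each $L_m$ annihilate $\rho$ from both sides, so $\mathcal{\hat{L}}$ restricted to $\mathcal{B}_{\alpha,\alpha}$ coincides with the GKSL generator built from the Hamiltonian $H|_{\mathcal{H}_\alpha}$ and the Lindblad operators $L_m|_{\mathcal{H}_\alpha}$ on $\mathcal{H}_\alpha$. By the Bu\v{c}a--Prosen theorem this restricted generator possesses a steady state $\rho_\infty^\alpha\in\mathcal{B}_{\alpha,\alpha}$, which after normalization is a density operator on $\mathcal{H}_\alpha$ (note $\Tr$ on $\mathcal{H}$ of an element of $\mathcal{B}_{\alpha,\alpha}$ is the trace on $\mathcal{H}_\alpha$). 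Applying Corollary~\ref{cor:main3} to this $d_\alpha$-dimensional GKSL equation — whose Lindblad operators are Hermitian and whose generating hypothesis was just verified — gives that its steady state is unique and equals $\mathbb{I}_{d_\alpha}/d_\alpha$, hence $\rho_\infty^\alpha|_{\mathcal{H}_\alpha}=\mathbb{I}_{d_\alpha}/d_\alpha$. Doing this for every $\alpha$ finishes the proof.

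The only genuine subtlety — the main obstacle — is the descent of the generation hypothesis in the middle step: one must be careful that compression is multiplicative \emph{only} on block-diagonal operators, and then combine this with the fact that the commutant of $S$ is precisely $\bigoplus_\beta\mathcal{B}(\mathcal{H}_\beta)$ to conclude that the compressed generators span all of $\mathcal{B}(\mathcal{H}_\alpha)$. Everything else is bookkeeping: that Hermiticity passes to the blocks, and that $\mathcal{\hat{L}}$ really restricts to the sector generator on $\mathcal{B}_{\alpha,\alpha}$.
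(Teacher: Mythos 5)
Your proposal is correct and follows essentially the same route as the paper: the paper likewise decomposes $H$ and the $L_m$ into blocks $H|_{\mathcal{H}_\alpha}$, $L_m|_{\mathcal{H}_\alpha}$, notes (via the footnoted fact that the commutant of $S$ is $\bigoplus_\beta\mathcal{B}_{\beta,\beta}$) that the generation hypothesis descends to each sector, and applies Theorem~\ref{thm:main1}/Corollary~\ref{cor:main3} to the restricted GKSL equation on each $\mathcal{H}_\alpha$. Your write-up merely makes explicit the bookkeeping (multiplicativity of the compression on block-diagonal operators, blockwise Hermiticity) that the paper leaves implicit.
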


\section{Examples}

In this section, we demonstrate the applications of Theorem \ref{thm:main1}, Corollaries \ref{cor:main3} and \ref{cor:sym_2}. As the simplest example, we consider the two-level system with gain and loss. Next, we present an application of Corollary \ref{cor:main3} to the transverse-field Ising model with boundary dephasing. Finally, we present applications of Corollary \ref{cor:sym_2} to the XYZ model and the tight-binding model with bulk dephasing, as prototypical examples of models with $\mathbb{Z}_2$ and U(1) strong symmetries. We also note that Theorem~\ref{thm:main1} can be applied to the boundary-driven open XXZ chain~\cite{znidaric_dephasing-induced_2010, prosen_open_2011, prosen_comments_2012,
znidaric_diffusive_2016}.

\subsection{Two-level system with gain and loss}
As our first example, we consider a two-level system with gain and loss. We write an orthonormal basis of $\mathcal{H}=\mathbb{C}^2$ as $\ket{\uparrow}$ and $\ket{\downarrow}$. The Lindblad operators of gain and loss are $L_g=\sqrt{\gamma_g}\ket{\uparrow}\bra{\downarrow}$ and $L_l=\sqrt{\gamma_l}\ket{\downarrow}\bra{\uparrow}$. Then, $L_g L_l\propto \ket{\uparrow}\bra{\uparrow}$ and $L_l L_g\propto \ket{\downarrow}\bra{\downarrow}$, and therefore $L_g$, $L_l$, $L_gL_l$ and $L_lL_g$ form the basis of $\mathcal{B}(\mathbb{C}^2)$. From Theorem \ref{thm:main1}, the NESS $\rho_\infty$ is unique and positive definite for an arbitrary Hamiltonian. 

\subsection{Transverse-field Ising model with boundary dephasing}\label{sec:TFIM}
Next, we consider the spin-$1/2$ transverse-field Ising chain under open boundary conditions~\cite{vasiloiu_enhancing_2018}
\begin{equation}
    H=\sum_{j=1}^{N-1} \sigma_{j}^{z} \sigma_{j+1}^{z}+h_x \sum_{j=1}^{N}\sigma^x_j
\end{equation}
with dephasing noise $L_1=\sqrt{\gamma}\sigma^z_1$ at the first site of the lattice. Here, $\sigma_j^\alpha$ $(\alpha=x,y,z)$ are the Pauli operators at site $j=1,\ldots,N$ acting on $d=2^N$ dimensional Hilbert space $\mathcal{H}$, $h_x\neq 0$ is the external magnetic field, $\gamma>0$ is the dissipation strength parameter.

By using Corollary \ref{cor:main3}, we can prove that the NESS $\rho_\infty$ is unique and written as $\rho_\infty=\mathbb{I}_{2^N}/{2^N}$.
\begin{proof}
We first note that $L_1\propto \sigma_1^z$, $[\sigma_1^z, H]\propto \sigma_{1}^y$, and then $\frac{1}{2}\sigma_{1}^y[\sigma_{1}^y\sigma_{1}^z,H]=\sigma_{2}^z$, $[\sigma_{2}^z,H]\propto \sigma_{2}^y$. Next, we observe the recurrence relation
\begin{align}
    \frac{1}{2}\sigma_{j}^y[\sigma_{j}^y\sigma_{j}^z, H]- \sigma_{j-1}^z &= \sigma_{j+1}^z, \\
    [\sigma_{j+1}^z,H]&\propto \sigma_{j+1}^y
\end{align}
for $j=2,\ldots, N-1$, which generates $\sigma_{j}^y$ and $\sigma_{j}^z$ for all $j$ from $L_1$ and $H$. Then, it is clear that they generate all the operators in $\mathcal{B}(\mathcal{H})$, and therefore the NESS is unique and written as $\rho_\infty=\mathbb{I}_{2^N}/{2^N}$ from Corollary \ref{cor:main3}.
\end{proof}

\subsection{XYZ model with bulk dephasing}\label{sec:XYZ}
As a prototypical example of models with the $\mathbb{Z}_2$ strong symmetry, we consider the spin-$1/2$ XYZ chain under periodic boundary conditions~\cite{vasiloiu_enhancing_2018,wellnitz_rise_2022}
\begin{equation}
    H=\sum_{j=1}^{N} (J_{x} \sigma_{j}^{x} \sigma_{j+1}^{x}+J_{y} \sigma_{j}^{y} \sigma_{j+1}^{y}+J_{z} \sigma_{j}^{z} \sigma_{j+1}^{z})+h_z \sum_{j=1}^{N}\sigma^z_j
\end{equation}
with dephasing strength $L_j=\sqrt{\gamma}\sigma^z_j$ at every site $j$. 
Here, $J_\alpha \in \mathbb{R}$ $(\alpha=x,y,z)$ are the exchange couplings, $h_z\in \mathbb{R}$ is the external magnetic field, $\gamma>0$ is the dissipation strength parameter, and $N$ is the number of sites. 
We assume that $|J_x|\neq |J_y|$~\footnote{If $J_x=-J_y$ with even $N$ or $J_x=J_y$, the system has the U(1) strong symmetry~\cite{znidaric_relaxation_2015}. Then we can prove that the NESS is unique in every $\mathcal{B}_{\alpha,\alpha}$ in the same manner as in Sec.~\ref{sec:tb}}.
By defining a unitary operator $S=\prod_{j=1}^N \sigma_{j}^{z}$ , one finds
\begin{equation}
    [S,H]=0 \text{ and } \left[S, L_{j}\right]=0 \text { for all } j.
\end{equation}
The eigenvalues of $S$ are $\pm 1$, and we write the corresponding subspace of operators as $\mathcal{B}_{\alpha,\beta}\ (\alpha,\beta=\pm)$.
If we define $\rho^{ \pm}$ as $\rho^{ \pm} \coloneqq\left(\mathbb{I}_{2^N} \pm S\right) / 2^{N}$, it can be checked that $\rho^{+} \in \mathcal{B}_{+,+}$, $\rho^{-} \in \mathcal{B}_{-,-}$, and $\hat{\mathcal{L}}(\rho^{ \pm})=0$. By using Corollary \ref{cor:sym_2}, we prove that they are the unique NESSs in $\mathcal{B}_{+,+}$ and $\mathcal{B}_{-,-}$, respectively.

\begin{proof}
First, we identify all the operators that commute with $S$. From the relations
\begin{equation}
    S \sigma_{j}^{x}=-\sigma_{j}^{x}S,\  S \sigma_{j}^{y}=-\sigma_{j}^{y}S,\  S \sigma_{j}^{z}=\sigma_{j}^{z}S,
\end{equation}
all the operators that commute with $S$ are spanned by products of an even number of $\sigma_{j}^{x}$ and $\sigma_{j}^{y}$.
Thus it is sufficient to prove that $\sigma^{\mu}_j \sigma^{\nu}_k\ (\mu,\nu=x,y)$ can be generated by $H$ and $L_j$ for all $1\leq j\leq k\leq N$. When $j=k$, it can be generated only by $L_j$, because $\sigma^{x}_j \sigma^{y}_j=-\sigma^{y}_j \sigma^x_j\propto L_j$ and $(\sigma^{x}_j)^2=(\sigma^{y}_j)^2\propto (L_j)^2$. Next, we consider the cases where $j\neq k$. First, one finds 
\begin{align}
    A_1&\coloneqq[\sigma^{z}_l,H]\propto \sum_{\sigma=\pm1} (J_x \sigma_l^y \sigma_{l+\sigma}^x-J_y \sigma_l^x \sigma_{l+\sigma}^y), \\
    A_2&\coloneqq[\sigma^{z}_{l+1},A_1]\propto  (J_x \sigma_l^y \sigma_{l+1}^y+J_y \sigma_l^x \sigma_{l+1}^x), \\
    A_3&\coloneqq[\sigma^{z}_l,A_2]\propto  (J_x \sigma_l^x \sigma_{l+1}^y-J_y \sigma_l^y \sigma_{l+1}^x), \\
    A_4&\coloneqq[\sigma^{z}_{l+1},A_3]\propto  (J_x \sigma_l^x \sigma_{l+1}^x+J_y \sigma_l^y \sigma_{l+1}^y), \\
    A_5&\coloneqq[\sigma^{z}_l,A_4]\propto  (J_x \sigma_l^y \sigma_{l+1}^x-J_y \sigma_l^x \sigma_{l+1}^y).
\end{align}
Noting that $|J_x|\neq |J_y|$, we obtain $\sigma_l^x \sigma_{l+1}^x$ and $\sigma_l^y \sigma_{l+1}^y$ by linear combinations of $A_2$ and $A_4$ and $\sigma_l^x \sigma_{l+1}^y$ and $\sigma_l^y \sigma_{l+1}^x$ by linear combinations of $A_3$ and $A_5$. Finally, since 
\begin{equation}
    \sigma^{\mu}_j \sigma^{\nu}_k= \sigma^{\mu}_j  \sigma^{x}_{j+1}\left(\prod_{l=j+1}^{k-2} 
    \sigma^{x}_l  \sigma^{x}_{l+1} \right)\sigma^{x}_{k-1}
    \sigma^{\nu}_{k},
\end{equation}
we have $\sigma^{\mu}_j \sigma^{\nu}_k\ (\mu,\nu=x,y)$ for all $1\leq j\leq k\leq N$, and thus we have all possible products of an even number of $\sigma_{j}^{x}$ and $\sigma_{j}^{y}$ $(j=1,2,\ldots, N)$. Therefore, from Corollary \ref{cor:sym_2}, the NESS is unique in $\mathcal{B}_{+,+}$ and $\mathcal{B}_{-,-}$, respectively.
\end{proof}

\begin{rem}
For simplicity, we assumed that the Hamiltonian is one-dimensional and translationally invariant. However, these assumptions are not necessary. To illustrate this, we write the set of sites as $\sites$ and the set of bonds as $\bonds$, and consider the following Hamiltonian on a general lattice $(\sites,\bonds)$:
\begin{equation}
    H=\sum_{j,k \in \sites} (J^{x}_{j,k} \sigma_{j}^{x} \sigma_{k}^{x}+J^{y}_{j,k}\sigma_{j}^{y} \sigma_{k}^{y}+J^{z}_{j,k} \sigma_{j}^{z} \sigma_{k}^{z})+\sum_{j\in \sites}h^z_j \sigma^z_j
    \label{eq:xyz_general}
\end{equation}
with dephasing noise $L_j=\sqrt{\gamma_j}\sigma^z_j$, where $\gamma_j>0$ for all $j\in \Lambda$. We assume that $|J^x_{j,k}| \neq |J^y_{j,k}|$ when $(j,k)\in \bonds$ and $J^x_{j,k}=J^y_{j,k}=0$ when $(j,k)\notin \bonds$. If the lattice $(\sites,\bonds)$ is connected~\footnote{
A lattice $(\sites,\bonds)$ is connected if for any $j,k \in \sites$ such that $j\neq k$, there exists a finite sequence $l_1,\ldots,l_n \in \sites$ such that $l_1=j$, $l_n=k$, and $(l_i,l_{i+1})\in \bonds$ for $i = 1,\ldots,n-1$.}, one can prove that the NESS is unique in $\mathcal{B}_{+,+}$ and $\mathcal{B}_{-,-}$, respectively. For example, equation \eqref{eq:xyz_general} includes the one-dimensional quantum compass model
\begin{equation}
   H=-\sum_{j=1}^{N / 2} J_{x} \sigma_{2 j-1}^{x} \sigma_{2 j}^{x}-\sum_{j=1}^{N / 2-1} J_{y} \sigma_{2 j}^{y} \sigma_{2 j+1}^{y} 
\end{equation}
with dephasing noise $L_j=\sqrt{\gamma}\sigma^z_j$ discussed in Ref. \cite{shibata_dissipative_2019}.
\end{rem}

\subsection{Tight-binding model with bulk dephasing}\label{sec:tb}

Finally, we consider the tight-binding chain under the periodic boundary conditions~\cite{medvedyeva_exact_2016}
\begin{equation}
  H= t \sum_{j=1}^{N} (c^\dagger_j c_{j+1}+c^\dagger_{j+1} c_j)+\delta \mathbb{I}_{2^N}
  \label{eq:spinless_dephasing}
\end{equation}
with dephasing noise $L_j=\sqrt{\gamma}n_j$ at every site $j$. Here, $c^\dagger_j$ and $c_j$ are the creation and annihilation operators, respectively, of a fermion at site $j=1,\ldots, N$, $n_j=c^\dagger_j c_j$ is the number operator, $t\neq 0$ is the hopping amplitude, $\gamma>0$ is the dephasing strength, and $\delta$ is a real constant. The eigenvalues and eigenmodes of $\hat{\mathcal{L}}$ do not depend on the constant $\delta$, but we assume that $\delta\neq 0$ to simplify the proof. If we write the vacuum state annihilated by all $c_j$ as $\ket{0}$, then the Hilbert space $\mathcal{H}$ is spanned by states of the form $\{ \prod_{j=1}^L (c^\dagger_j)^{m_j} \}\ket{0}$ ${({m_j}=0,1)}$.

Next, we write the total number operator as $N_\mathrm{tot}=\sum_{j =1}^N n_j$ and define a unitary operator $S=e^{i N_\mathrm{tot}}$. Then, the eigenvalues of $S$ are $e^{i \alpha}$ $(\alpha=0,1,\ldots, N)$. Since $S$ commutes with $H$ and all $L_j$, $\mathcal{B}(\mathcal{H})$ can be decomposed into invariant subspaces of $\mathcal{\hat{L}}$ and we write them as $\mathcal{B}_{\alpha,\beta}$ $(\alpha,\beta=0,1,\ldots, N)$. Then, we prove that the NESS is unique in every $\mathcal{B}_{\alpha,\alpha}$.

\begin{proof}
Since all the operators that commute with $S$ are written as a sum of monomials that are products of the same number of creation and annihilation operators, it is sufficient to prove that $\mathbb{I}_{2^N}$ and $c^\dagger_j c_k$ can be generated by $H$ and $L_j$ for all $1\leq j,k\leq N$. When $j=k$, $c^\dagger_j c_j$ is proportional to $L_j$, so we concentrate on the case $j\neq k$. Without loss of generality, we can assume that $j<k$. First, we see that the following commutation relations hold:
\begin{align}
[n_l,H]&= t \sum_{\sigma=\pm1}(c^\dagger_l c_{l+\sigma}-c^\dagger_{l+\sigma} c_l), \\
[n_{l+1},[n_l,H]]&= -t(c^\dagger_l c_{l+1}+c^\dagger_{l+1} c_{l}), \\
[n_{l},[n_{l+1},[n_l,H]]]&= -t (c^\dagger_l c_{l+1}-c^\dagger_{l+1} c_{l}).
\end{align}
Therefore, we obtain $c^\dagger_l c_{l+1}$ and $c^\dagger_{l+1} c_l$ from multiplication, addition, and scalar multiplication of $H$, $L_l$, and $L_{l+1}$. 
Since $[c^\dagger_l c_m,c^\dagger_m c_n]=c^\dagger_l c_n$ when $l\neq n$, we can generate $c^\dagger_j c_k$ for any $1\leq j<k\leq N$ with $c^\dagger_j c_{j+1}$, \ldots, $c^\dagger_{k-1} c_{k}$.
Finally, $\mathbb{I}_{2^N}$ can be obtained by $[H-t \sum_{j=1}^{N} (c^\dagger_j c_{j+1}+c^\dagger_{j+1} c_j)]/\delta$. Therefore, from Corollary \ref{cor:sym_2}, the NESS is unique in every $\mathcal{B}_{\alpha,\alpha}$.
\end{proof}

\begin{rem}
The result can be generalized to the tight-binding model on a general lattice $(\sites,\bonds)$ with $N$ sites:
\begin{equation}
  H= \sum_{j, k \in \sites} t_{j,k} c^\dagger_j c_k+\delta\mathbb{I}_{2^N} 
  ,\quad L_j= \sqrt{\gamma_j}n_j,
\end{equation}
where $H$ is Hermitian, i.e., $t_{j,k}=t^*_{k,j}$, $\gamma_j>0$ for all $j\in \Lambda$, and $\delta$ is a real constant.
We also assume that $t_{j,k}\neq 0$ when $(j,k)\in \bonds$ and $t_{j,k}=0$ when $(j,k)\notin \bonds$. When the lattice $(\sites,\bonds)$ is connected, one can prove that the NESS is unique in every $\mathcal{B}_{\alpha,\alpha}$.
\end{rem}

\vspace{5mm}

\section{Conclusion}
We presented a simple proof of a sufficient condition for the uniqueness of NESSs of GKSL equations. We also presented applications of the sufficient condition to the transverse-field Ising model, the XYZ model, and the tight-binding model with dephasing. 
Our results here open many interesting questions. The most important direction for future study is to generalize our proof to the sufficient and necessary condition for the uniqueness of the NESS.
Another direction is to apply the sufficient condition to clarify the degeneracy of the NESS in the presence of the non-abelian strong symmetries~\cite{zhang_stationary_2020} or the hidden strong symmetries in the form of quasi-local charges~\cite{de_leeuw_hidden_2023}.

\begin{acknowledgments}
We acknowledge fruitful comments by Franco Fagnola, Wen Wei Ho, Hosho Katsura, Zala Lenar\v{c}i\v{c}, Leonardo Mazza, Ken Mochizuki, Tadahiro Miyao, 
Toma\v{z} Prosen, Herbert Spohn,  Gergely Zar\'{a}nd, and Marko \v{Z}nidari\v{c}. H.Y. was supported by JSPS KAKENHI Grant-in-Aid for JSPS fellows Grant No. JP22J20888, the Forefront Physics and Mathematics Program to Drive Transformation, and JSR Fellowship, the University of Tokyo.
\end{acknowledgments}

\appendix

\section{Frigerio's theorem}
\label{sec:frigerio}
In this section, we briefly review Frigerio's theorem on the uniqueness of the NESS. While his result applies to general infinite-dimensional systems, here we state the theorem in the $d$-dimensional case. For a set of operators $\mathcal{A} \subseteq \mathcal{B}(\mathcal{H})$, let us denote by $\mathcal{A}^\prime$ the commutant of the set $\mathcal{A}$, i.e., the set of operators that commute with all the elements of $\mathcal{A}$. Frigerio~\cite{frigerio_quantum_1977,frigerio_stationary_1978} proved that if there exists a positive definite NESS $\rho_\infty$, then $\rho_\infty$ is the unique NESS iff $\left\{H, L_1,\ldots, L_M, L^\dagger_1,\ldots, L^\dagger_M\right\}^\prime=\left\{c \mathbb{I}_d | c\in \mathbb{C}\right\}$, where $\mathbb{I}_d$ is the identity matrix of size $d$. 
This condition is equivalent to the condition that the set of operators $\left\{H, L_1,\ldots, L_M, L^\dagger_1,\ldots, L^\dagger_M\right\}$ generates all the operators
under multiplication, addition, and scalar multiplication. 
Note that the assumption of the existence of a positive definite NESS is necessary, and without this assumption, several counterexamples can be found~\cite{zhang_2023}.

\bibliographystyle{apsrev4-1}
\bibliography{reference2}

\begin{thebibliography}{38}%
\makeatletter
\providecommand \@ifxundefined [1]{%
 \@ifx{#1\undefined}
}%
\providecommand \@ifnum [1]{%
 \ifnum #1\expandafter \@firstoftwo
 \else \expandafter \@secondoftwo
 \fi
}%
\providecommand \@ifx [1]{%
 \ifx #1\expandafter \@firstoftwo
 \else \expandafter \@secondoftwo
 \fi
}%
\providecommand \natexlab [1]{#1}%
\providecommand \enquote  [1]{``#1''}%
\providecommand \bibnamefont  [1]{#1}%
\providecommand \bibfnamefont [1]{#1}%
\providecommand \citenamefont [1]{#1}%
\providecommand \href@noop [0]{\@secondoftwo}%
\providecommand \href [0]{\begingroup \@sanitize@url \@href}%
\providecommand \@href[1]{\@@startlink{#1}\@@href}%
\providecommand \@@href[1]{\endgroup#1\@@endlink}%
\providecommand \@sanitize@url [0]{\catcode `\\12\catcode `\$12\catcode
  `\&12\catcode `\#12\catcode `\^12\catcode `\_12\catcode `\%12\relax}%
\providecommand \@@startlink[1]{}%
\providecommand \@@endlink[0]{}%
\providecommand \url  [0]{\begingroup\@sanitize@url \@url }%
\providecommand \@url [1]{\endgroup\@href {#1}{\urlprefix }}%
\providecommand \urlprefix  [0]{URL }%
\providecommand \Eprint [0]{\href }%
\providecommand \doibase [0]{http://dx.doi.org/}%
\providecommand \selectlanguage [0]{\@gobble}%
\providecommand \bibinfo  [0]{\@secondoftwo}%
\providecommand \bibfield  [0]{\@secondoftwo}%
\providecommand \translation [1]{[#1]}%
\providecommand \BibitemOpen [0]{}%
\providecommand \bibitemStop [0]{}%
\providecommand \bibitemNoStop [0]{.\EOS\space}%
\providecommand \EOS [0]{\spacefactor3000\relax}%
\providecommand \BibitemShut  [1]{\csname bibitem#1\endcsname}%
\let\auto@bib@innerbib\@empty
\bibitem [{\citenamefont {Gorini}\ \emph {et~al.}(1976)\citenamefont {Gorini},
  \citenamefont {Kossakowski},\ and\ \citenamefont
  {Sudarshan}}]{gorini_completely_1976}%
  \BibitemOpen
  \bibfield  {author} {\bibinfo {author} {\bibfnamefont {V.}~\bibnamefont
  {Gorini}}, \bibinfo {author} {\bibfnamefont {A.}~\bibnamefont {Kossakowski}},
  \ and\ \bibinfo {author} {\bibfnamefont {E.~C.~G.}\ \bibnamefont
  {Sudarshan}},\ }\href {\doibase 10.1063/1.522979} {\bibfield  {journal}
  {\bibinfo  {journal} {J. Math. Phys.}\ }\textbf {\bibinfo {volume} {17}},\
  \bibinfo {pages} {821} (\bibinfo {year} {1976})}\BibitemShut {NoStop}%
\bibitem [{\citenamefont {Lindblad}(1976)}]{lindblad_generators_1976}%
  \BibitemOpen
  \bibfield  {author} {\bibinfo {author} {\bibfnamefont {G.}~\bibnamefont
  {Lindblad}},\ }\href {\doibase 10.1007/BF01608499} {\bibfield  {journal}
  {\bibinfo  {journal} {Comm. Math. Phys.}\ }\textbf {\bibinfo {volume} {48}},\
  \bibinfo {pages} {119} (\bibinfo {year} {1976})}\BibitemShut {NoStop}%
\bibitem [{\citenamefont {Breuer}\ and\ \citenamefont
  {Petruccione}(2007)}]{breuer_theory_2007}%
  \BibitemOpen
  \bibfield  {author} {\bibinfo {author} {\bibfnamefont {H.-P.}\ \bibnamefont
  {Breuer}}\ and\ \bibinfo {author} {\bibfnamefont {F.}~\bibnamefont
  {Petruccione}},\ }\href {\doibase 10.1093/acprof:oso/9780199213900.001.0001}
  {\emph {\bibinfo {title} {The {Theory} of {Open} {Quantum} {Systems}}}},\
  \bibinfo {edition} {1st}\ ed.\ (\bibinfo  {publisher} {Oxford University
  Press, Oxford},\ \bibinfo {year} {2007})\BibitemShut {NoStop}%
\bibitem [{\citenamefont {Frigerio}(1977)}]{frigerio_quantum_1977}%
  \BibitemOpen
  \bibfield  {author} {\bibinfo {author} {\bibfnamefont {A.}~\bibnamefont
  {Frigerio}},\ }\href {\doibase 10.1007/BF00398571} {\bibfield  {journal}
  {\bibinfo  {journal} {Lett. Math. Phys.}\ }\textbf {\bibinfo {volume} {2}},\
  \bibinfo {pages} {79} (\bibinfo {year} {1977})}\BibitemShut {NoStop}%
\bibitem [{\citenamefont {Frigerio}(1978)}]{frigerio_stationary_1978}%
  \BibitemOpen
  \bibfield  {author} {\bibinfo {author} {\bibfnamefont {A.}~\bibnamefont
  {Frigerio}},\ }\href {\doibase 10.1007/BF01196936} {\bibfield  {journal}
  {\bibinfo  {journal} {Commun. Math. Phys.}\ }\textbf {\bibinfo {volume}
  {63}},\ \bibinfo {pages} {269} (\bibinfo {year} {1978})}\BibitemShut
  {NoStop}%
\bibitem [{\citenamefont {Spohn}(1976)}]{spohn_approach_1976}%
  \BibitemOpen
  \bibfield  {author} {\bibinfo {author} {\bibfnamefont {H.}~\bibnamefont
  {Spohn}},\ }\href {\doibase 10.1016/0034-4877(76)90040-9} {\bibfield
  {journal} {\bibinfo  {journal} {Rep. Math. Phys.}\ }\textbf {\bibinfo
  {volume} {10}},\ \bibinfo {pages} {189} (\bibinfo {year} {1976})}\BibitemShut
  {NoStop}%
\bibitem [{\citenamefont {Spohn}(1977)}]{spohn_algebraic_1977}%
  \BibitemOpen
  \bibfield  {author} {\bibinfo {author} {\bibfnamefont {H.}~\bibnamefont
  {Spohn}},\ }\href {\doibase 10.1007/BF00420668} {\bibfield  {journal}
  {\bibinfo  {journal} {Lett. Math. Phys.}\ }\textbf {\bibinfo {volume} {2}},\
  \bibinfo {pages} {33} (\bibinfo {year} {1977})}\BibitemShut {NoStop}%
\bibitem [{\citenamefont {Evans}(1977)}]{evans_irreducible_1977}%
  \BibitemOpen
  \bibfield  {author} {\bibinfo {author} {\bibfnamefont {D.~E.}\ \bibnamefont
  {Evans}},\ }\href {\doibase 10.1007/BF01614091} {\bibfield  {journal}
  {\bibinfo  {journal} {Commun. Math. Phys.}\ }\textbf {\bibinfo {volume}
  {54}},\ \bibinfo {pages} {293} (\bibinfo {year} {1977})}\BibitemShut
  {NoStop}%
\bibitem [{\citenamefont {Spohn}(1980)}]{spohn_kinetic_1980}%
  \BibitemOpen
  \bibfield  {author} {\bibinfo {author} {\bibfnamefont {H.}~\bibnamefont
  {Spohn}},\ }\href {\doibase 10.1103/RevModPhys.52.569} {\bibfield  {journal}
  {\bibinfo  {journal} {Rev. Mod. Phys.}\ }\textbf {\bibinfo {volume} {52}},\
  \bibinfo {pages} {569} (\bibinfo {year} {1980})}\BibitemShut {NoStop}%
\bibitem [{\citenamefont {Nigro}(2019)}]{nigro_uniqueness_2019}%
  \BibitemOpen
  \bibfield  {author} {\bibinfo {author} {\bibfnamefont {D.}~\bibnamefont
  {Nigro}},\ }\href {\doibase 10.1088/1742-5468/ab0c1c} {\bibfield  {journal}
  {\bibinfo  {journal} {J. Stat. Mech.}\ }\textbf {\bibinfo {volume} {2019}},\
  \bibinfo {pages} {043202} (\bibinfo {year} {2019})}\BibitemShut {NoStop}%
\bibitem [{\citenamefont {Prosen}(2012)}]{prosen_comments_2012}%
  \BibitemOpen
  \bibfield  {author} {\bibinfo {author} {\bibfnamefont {T.}~\bibnamefont
  {Prosen}},\ }\href {\doibase 10.1088/0031-8949/86/05/058511} {\bibfield
  {journal} {\bibinfo  {journal} {Phys. Scr.}\ }\textbf {\bibinfo {volume}
  {86}},\ \bibinfo {pages} {058511} (\bibinfo {year} {2012})}\BibitemShut
  {NoStop}%
\bibitem [{\citenamefont {Baumgartner}\ \emph {et~al.}(2008)\citenamefont
  {Baumgartner}, \citenamefont {Narnhofer},\ and\ \citenamefont
  {Thirring}}]{baumgartner_analysis_2008}%
  \BibitemOpen
  \bibfield  {author} {\bibinfo {author} {\bibfnamefont {B.}~\bibnamefont
  {Baumgartner}}, \bibinfo {author} {\bibfnamefont {H.}~\bibnamefont
  {Narnhofer}}, \ and\ \bibinfo {author} {\bibfnamefont {W.}~\bibnamefont
  {Thirring}},\ }\href {\doibase 10.1088/1751-8113/41/6/065201} {\bibfield
  {journal} {\bibinfo  {journal} {J. Phys. A: Math. Theor.}\ }\textbf {\bibinfo
  {volume} {41}},\ \bibinfo {pages} {065201} (\bibinfo {year}
  {2008})}\BibitemShut {NoStop}%
\bibitem [{\citenamefont {Baumgartner}\ and\ \citenamefont
  {Narnhofer}(2008)}]{baumgartner_analysis_2008-1}%
  \BibitemOpen
  \bibfield  {author} {\bibinfo {author} {\bibfnamefont {B.}~\bibnamefont
  {Baumgartner}}\ and\ \bibinfo {author} {\bibfnamefont {H.}~\bibnamefont
  {Narnhofer}},\ }\href {\doibase 10.1088/1751-8113/41/39/395303} {\bibfield
  {journal} {\bibinfo  {journal} {J. Phys. A: Math. Theor.}\ }\textbf {\bibinfo
  {volume} {41}},\ \bibinfo {pages} {395303} (\bibinfo {year}
  {2008})}\BibitemShut {NoStop}%
\bibitem [{\citenamefont {Wolf}\ and\ \citenamefont
  {Perez-Garcia}(2010)}]{wolf_2010}%
  \BibitemOpen
  \bibfield  {author} {\bibinfo {author} {\bibfnamefont {M.~M.}\ \bibnamefont
  {Wolf}}\ and\ \bibinfo {author} {\bibfnamefont {D.}~\bibnamefont
  {Perez-Garcia}},\ }\href@noop {} {\bibfield  {journal} {\bibinfo  {journal}
  {arXiv:1005.4545}\ } (\bibinfo {year} {2010})}\BibitemShut {NoStop}%
\bibitem [{\citenamefont {Buča}\ and\ \citenamefont
  {Prosen}(2012)}]{buca_note_2012}%
  \BibitemOpen
  \bibfield  {author} {\bibinfo {author} {\bibfnamefont {B.}~\bibnamefont
  {Buča}}\ and\ \bibinfo {author} {\bibfnamefont {T.}~\bibnamefont {Prosen}},\
  }\href {\doibase 10.1088/1367-2630/14/7/073007} {\bibfield  {journal}
  {\bibinfo  {journal} {New J. Phys.}\ }\textbf {\bibinfo {volume} {14}},\
  \bibinfo {pages} {073007} (\bibinfo {year} {2012})}\BibitemShut {NoStop}%
\bibitem [{\citenamefont {Burgarth}\ \emph {et~al.}(2013)\citenamefont
  {Burgarth}, \citenamefont {Chiribella}, \citenamefont {Giovannetti},
  \citenamefont {Perinotti},\ and\ \citenamefont
  {Yuasa}}]{burgarthErgodicMixingQuantum2013}%
  \BibitemOpen
  \bibfield  {author} {\bibinfo {author} {\bibfnamefont {D.}~\bibnamefont
  {Burgarth}}, \bibinfo {author} {\bibfnamefont {G.}~\bibnamefont
  {Chiribella}}, \bibinfo {author} {\bibfnamefont {V.}~\bibnamefont
  {Giovannetti}}, \bibinfo {author} {\bibfnamefont {P.}~\bibnamefont
  {Perinotti}}, \ and\ \bibinfo {author} {\bibfnamefont {K.}~\bibnamefont
  {Yuasa}},\ }\href {\doibase 10.1088/1367-2630/15/7/073045} {\bibfield
  {journal} {\bibinfo  {journal} {New J. Phys.}\ }\textbf {\bibinfo {volume}
  {15}},\ \bibinfo {pages} {073045} (\bibinfo {year} {2013})}\BibitemShut
  {NoStop}%
\bibitem [{\citenamefont {Albert}\ and\ \citenamefont
  {Jiang}(2014)}]{albert_symmetries_2014}%
  \BibitemOpen
  \bibfield  {author} {\bibinfo {author} {\bibfnamefont {V.~V.}\ \bibnamefont
  {Albert}}\ and\ \bibinfo {author} {\bibfnamefont {L.}~\bibnamefont {Jiang}},\
  }\href {\doibase 10.1103/PhysRevA.89.022118} {\bibfield  {journal} {\bibinfo
  {journal} {Phys. Rev. A}\ }\textbf {\bibinfo {volume} {89}},\ \bibinfo
  {pages} {022118} (\bibinfo {year} {2014})}\BibitemShut {NoStop}%
\bibitem [{\citenamefont {Albert}\ \emph {et~al.}(2016)\citenamefont {Albert},
  \citenamefont {Bradlyn}, \citenamefont {Fraas},\ and\ \citenamefont
  {Jiang}}]{albert_geometry_2016}%
  \BibitemOpen
  \bibfield  {author} {\bibinfo {author} {\bibfnamefont {V.~V.}\ \bibnamefont
  {Albert}}, \bibinfo {author} {\bibfnamefont {B.}~\bibnamefont {Bradlyn}},
  \bibinfo {author} {\bibfnamefont {M.}~\bibnamefont {Fraas}}, \ and\ \bibinfo
  {author} {\bibfnamefont {L.}~\bibnamefont {Jiang}},\ }\href {\doibase
  10.1103/PhysRevX.6.041031} {\bibfield  {journal} {\bibinfo  {journal} {Phys.
  Rev. X}\ }\textbf {\bibinfo {volume} {6}},\ \bibinfo {pages} {041031}
  (\bibinfo {year} {2016})}\BibitemShut {NoStop}%
\bibitem [{\citenamefont {Zhang}\ \emph {et~al.}(2020)\citenamefont {Zhang},
  \citenamefont {Tindall}, \citenamefont {Mur-Petit}, \citenamefont {Jaksch},\
  and\ \citenamefont {Buča}}]{zhang_stationary_2020}%
  \BibitemOpen
  \bibfield  {author} {\bibinfo {author} {\bibfnamefont {Z.}~\bibnamefont
  {Zhang}}, \bibinfo {author} {\bibfnamefont {J.}~\bibnamefont {Tindall}},
  \bibinfo {author} {\bibfnamefont {J.}~\bibnamefont {Mur-Petit}}, \bibinfo
  {author} {\bibfnamefont {D.}~\bibnamefont {Jaksch}}, \ and\ \bibinfo {author}
  {\bibfnamefont {B.}~\bibnamefont {Buča}},\ }\href {\doibase
  10.1088/1751-8121/ab88e3} {\bibfield  {journal} {\bibinfo  {journal} {J.
  Phys. A: Math. Theor.}\ }\textbf {\bibinfo {volume} {53}},\ \bibinfo {pages}
  {215304} (\bibinfo {year} {2020})}\BibitemShut {NoStop}%
\bibitem [{\citenamefont {Amato}\ \emph {et~al.}(2023)\citenamefont {Amato},
  \citenamefont {Facchi},\ and\ \citenamefont {Konderak}}]{amato_2023}%
  \BibitemOpen
  \bibfield  {author} {\bibinfo {author} {\bibfnamefont {D.}~\bibnamefont
  {Amato}}, \bibinfo {author} {\bibfnamefont {P.}~\bibnamefont {Facchi}}, \
  and\ \bibinfo {author} {\bibfnamefont {A.}~\bibnamefont {Konderak}},\
  }\href@noop {} {\bibfield  {journal} {\bibinfo  {journal} {J. Phys. A Math.
  Theor.}\ }\textbf {\bibinfo {volume} {56}},\ \bibinfo {pages} {265304}
  (\bibinfo {year} {2023})}\BibitemShut {NoStop}%
\bibitem [{\citenamefont {Fagnola}\ and\ \citenamefont
  {Rebolledo}(2002)}]{fagnola_subharmonic_2002}%
  \BibitemOpen
  \bibfield  {author} {\bibinfo {author} {\bibfnamefont {F.}~\bibnamefont
  {Fagnola}}\ and\ \bibinfo {author} {\bibfnamefont {R.}~\bibnamefont
  {Rebolledo}},\ }\href {\doibase 10.1063/1.1424475} {\bibfield  {journal}
  {\bibinfo  {journal} {J. Math. Phys.}\ }\textbf {\bibinfo {volume} {43}},\
  \bibinfo {pages} {1074} (\bibinfo {year} {2002})}\BibitemShut {NoStop}%
\bibitem [{\citenamefont {Lieb}(1989)}]{lieb_two_1989}%
  \BibitemOpen
  \bibfield  {author} {\bibinfo {author} {\bibfnamefont {E.~H.}\ \bibnamefont
  {Lieb}},\ }\href {\doibase 10.1103/PhysRevLett.62.1201} {\bibfield  {journal}
  {\bibinfo  {journal} {Phys. Rev. Lett.}\ }\textbf {\bibinfo {volume} {62}},\
  \bibinfo {pages} {1201} (\bibinfo {year} {1989})}\BibitemShut {NoStop}%
\bibitem [{\citenamefont {Tasaki}(2020)}]{tasaki_physics_2020}%
  \BibitemOpen
  \bibfield  {author} {\bibinfo {author} {\bibfnamefont {H.}~\bibnamefont
  {Tasaki}},\ }\href {https://doi.org/10.1007/978-3-030-41265-4} {\emph
  {\bibinfo {title} {Physics and Mathematics of Quantum Many-Body Systems}}}\
  (\bibinfo  {publisher} {Springer, Berlin},\ \bibinfo {year}
  {2020})\BibitemShut {NoStop}%
\bibitem [{Note1()}]{Note1}%
  \BibitemOpen
  \bibinfo {note} {We learned this method from Hosho Katsura.}\BibitemShut
  {Stop}%
\bibitem [{Note2()}]{Note2}%
  \BibitemOpen
  \bibinfo {note} {More generally, if there exists an $M\times M$ unitary
  matrix $u$ such that $L_m^\dagger = \DOTSB \sum@ \slimits@ _{n=1}^M u_{mn}
  L_n$, the completely mixed state $\protect \mathbb {I}_d/d$ is a
  NESS}\BibitemShut {NoStop}%
\bibitem [{Note3()}]{Note3}%
  \BibitemOpen
  \bibinfo {note} {Note that an operator $A\in \protect \mathcal {B}(\protect
  \mathcal {H})$ commutes with $S$ if and only if $A$ is expressed as $A=\oplus
  _{\alpha =1}^{n_S}A_\alpha $, where $A_\alpha \in \protect \mathcal
  {B}_{\alpha ,\alpha }$.}\BibitemShut {Stop}%
\bibitem [{\citenamefont {Žnidarič}(2010)}]{znidaric_dephasing-induced_2010}%
  \BibitemOpen
  \bibfield  {author} {\bibinfo {author} {\bibfnamefont {M.}~\bibnamefont
  {Žnidarič}},\ }\href {\doibase 10.1088/1367-2630/12/4/043001} {\bibfield
  {journal} {\bibinfo  {journal} {New J. Phys.}\ }\textbf {\bibinfo {volume}
  {12}},\ \bibinfo {pages} {043001} (\bibinfo {year} {2010})}\BibitemShut
  {NoStop}%
\bibitem [{\citenamefont {Prosen}(2011)}]{prosen_open_2011}%
  \BibitemOpen
  \bibfield  {author} {\bibinfo {author} {\bibfnamefont {T.}~\bibnamefont
  {Prosen}},\ }\href {\doibase 10.1103/PhysRevLett.106.217206} {\bibfield
  {journal} {\bibinfo  {journal} {Phys. Rev. Lett.}\ }\textbf {\bibinfo
  {volume} {106}},\ \bibinfo {pages} {217206} (\bibinfo {year}
  {2011})}\BibitemShut {NoStop}%
\bibitem [{\citenamefont {Znidaric}\ \emph {et~al.}(2016)\citenamefont
  {Znidaric}, \citenamefont {Scardicchio},\ and\ \citenamefont
  {Varma}}]{znidaric_diffusive_2016}%
  \BibitemOpen
  \bibfield  {author} {\bibinfo {author} {\bibfnamefont {M.}~\bibnamefont
  {Znidaric}}, \bibinfo {author} {\bibfnamefont {A.}~\bibnamefont
  {Scardicchio}}, \ and\ \bibinfo {author} {\bibfnamefont {V.~K.}\ \bibnamefont
  {Varma}},\ }\href {\doibase 10.1103/PhysRevLett.117.040601} {\bibfield
  {journal} {\bibinfo  {journal} {Phys. Rev. Lett.}\ }\textbf {\bibinfo
  {volume} {117}},\ \bibinfo {pages} {040601} (\bibinfo {year}
  {2016})}\BibitemShut {NoStop}%
\bibitem [{\citenamefont {Vasiloiu}\ \emph {et~al.}(2018)\citenamefont
  {Vasiloiu}, \citenamefont {Carollo},\ and\ \citenamefont
  {Garrahan}}]{vasiloiu_enhancing_2018}%
  \BibitemOpen
  \bibfield  {author} {\bibinfo {author} {\bibfnamefont {L.~M.}\ \bibnamefont
  {Vasiloiu}}, \bibinfo {author} {\bibfnamefont {F.}~\bibnamefont {Carollo}}, \
  and\ \bibinfo {author} {\bibfnamefont {J.~P.}\ \bibnamefont {Garrahan}},\
  }\href {\doibase 10.1103/PhysRevB.98.094308} {\bibfield  {journal} {\bibinfo
  {journal} {Phys. Rev. B}\ }\textbf {\bibinfo {volume} {98}},\ \bibinfo
  {pages} {094308} (\bibinfo {year} {2018})}\BibitemShut {NoStop}%
\bibitem [{\citenamefont {Wellnitz}\ \emph {et~al.}(2022)\citenamefont
  {Wellnitz}, \citenamefont {Preisser}, \citenamefont {Alba}, \citenamefont
  {Dubail},\ and\ \citenamefont {Schachenmayer}}]{wellnitz_rise_2022}%
  \BibitemOpen
  \bibfield  {author} {\bibinfo {author} {\bibfnamefont {D.}~\bibnamefont
  {Wellnitz}}, \bibinfo {author} {\bibfnamefont {G.}~\bibnamefont {Preisser}},
  \bibinfo {author} {\bibfnamefont {V.}~\bibnamefont {Alba}}, \bibinfo {author}
  {\bibfnamefont {J.}~\bibnamefont {Dubail}}, \ and\ \bibinfo {author}
  {\bibfnamefont {J.}~\bibnamefont {Schachenmayer}},\ }\href {\doibase
  10.1103/PhysRevLett.129.170401} {\bibfield  {journal} {\bibinfo  {journal}
  {Phys. Rev. Lett.}\ }\textbf {\bibinfo {volume} {129}},\ \bibinfo {pages}
  {170401} (\bibinfo {year} {2022})}\BibitemShut {NoStop}%
\bibitem [{Note4()}]{Note4}%
  \BibitemOpen
  \bibinfo {note} {If $J_x=-J_y$ with even $N$ or $J_x=J_y$, the system has the
  U(1) strong symmetry~\cite {znidaric_relaxation_2015}. Then we can prove that
  the NESS is unique in every $\protect \mathcal {B}_{\alpha ,\alpha }$ in the
  same manner as in Sec.~\ref {sec:tb}}\BibitemShut {NoStop}%
\bibitem [{Note5()}]{Note5}%
  \BibitemOpen
  \bibinfo {note} {A lattice $(\Lambda ,B)$ is connected if for any $j,k \in
  \Lambda $ such that $j\protect \neq k$, there exists a finite sequence
  $l_1,\protect \ldots ,l_n \in \Lambda $ such that $l_1=j$, $l_n=k$, and
  $(l_i,l_{i+1})\in B$ for $i = 1,\protect \ldots ,n-1$.}\BibitemShut {Stop}%
\bibitem [{\citenamefont {Shibata}\ and\ \citenamefont
  {Katsura}(2019)}]{shibata_dissipative_2019}%
  \BibitemOpen
  \bibfield  {author} {\bibinfo {author} {\bibfnamefont {N.}~\bibnamefont
  {Shibata}}\ and\ \bibinfo {author} {\bibfnamefont {H.}~\bibnamefont
  {Katsura}},\ }\href {\doibase 10.1103/PhysRevB.99.174303} {\bibfield
  {journal} {\bibinfo  {journal} {Phys. Rev. B}\ }\textbf {\bibinfo {volume}
  {99}},\ \bibinfo {pages} {174303} (\bibinfo {year} {2019})}\BibitemShut
  {NoStop}%
\bibitem [{\citenamefont {Medvedyeva}\ \emph {et~al.}(2016)\citenamefont
  {Medvedyeva}, \citenamefont {Essler},\ and\ \citenamefont
  {Prosen}}]{medvedyeva_exact_2016}%
  \BibitemOpen
  \bibfield  {author} {\bibinfo {author} {\bibfnamefont {M.~V.}\ \bibnamefont
  {Medvedyeva}}, \bibinfo {author} {\bibfnamefont {F.~H.~L.}\ \bibnamefont
  {Essler}}, \ and\ \bibinfo {author} {\bibfnamefont {T.}~\bibnamefont
  {Prosen}},\ }\href {\doibase 10.1103/PhysRevLett.117.137202} {\bibfield
  {journal} {\bibinfo  {journal} {Phys. Rev. Lett.}\ }\textbf {\bibinfo
  {volume} {117}},\ \bibinfo {pages} {137202} (\bibinfo {year}
  {2016})}\BibitemShut {NoStop}%
\bibitem [{\citenamefont {de~Leeuw}\ \emph {et~al.}(2024)\citenamefont
  {de~Leeuw}, \citenamefont {Paletta}, \citenamefont {Pozsgay},\ and\
  \citenamefont {Vernier}}]{de_leeuw_hidden_2023}%
  \BibitemOpen
  \bibfield  {author} {\bibinfo {author} {\bibfnamefont {M.}~\bibnamefont
  {de~Leeuw}}, \bibinfo {author} {\bibfnamefont {C.}~\bibnamefont {Paletta}},
  \bibinfo {author} {\bibfnamefont {B.}~\bibnamefont {Pozsgay}}, \ and\
  \bibinfo {author} {\bibfnamefont {E.}~\bibnamefont {Vernier}},\ }\href@noop
  {} {\bibfield  {journal} {\bibinfo  {journal} {Phys. Rev. B}\ }\textbf
  {\bibinfo {volume} {109}},\ \bibinfo {pages} {054311} (\bibinfo {year}
  {2024})}\BibitemShut {NoStop}%
\bibitem [{\citenamefont {Zhang}\ and\ \citenamefont
  {Barthel}(2023)}]{zhang_2023}%
  \BibitemOpen
  \bibfield  {author} {\bibinfo {author} {\bibfnamefont {Y.}~\bibnamefont
  {Zhang}}\ and\ \bibinfo {author} {\bibfnamefont {T.}~\bibnamefont
  {Barthel}},\ }\href@noop {} {\bibfield  {journal} {\bibinfo  {journal} {J.
  Phys. A: Math. Theor.}\ } (\bibinfo {year} {2023})}\BibitemShut {NoStop}%
\bibitem [{\citenamefont {Žnidarič}(2015)}]{znidaric_relaxation_2015}%
  \BibitemOpen
  \bibfield  {author} {\bibinfo {author} {\bibfnamefont {M.}~\bibnamefont
  {Žnidarič}},\ }\href {\doibase 10.1103/PhysRevE.92.042143} {\bibfield
  {journal} {\bibinfo  {journal} {Phys. Rev. E}\ }\textbf {\bibinfo {volume}
  {92}},\ \bibinfo {pages} {042143} (\bibinfo {year} {2015})}\BibitemShut
  {NoStop}%
\end{thebibliography}%

\end{document}